\documentclass[12pt]{article}
\usepackage{amsmath,amssymb,amscd,amsthm,color,graphics,verbatim}
\usepackage[margin=1.0in]{geometry}

\usepackage{graphicx}

\usepackage[enableskew]{youngtab}
\def\qed{\hfill $\vrule height 2.5mm  width 2.5mm depth 0mm $}

\newcommand{\Arctan}{\operatorname{Arctan}}

\newtheorem{theorem}{Theorem}
\newtheorem{proposition}[theorem]{Proposition}

\newtheorem{conjecture}[theorem]{Conjecture}

\theoremstyle{definition}

\newtheorem{remark}[theorem]{Remark}
\newtheorem{example}[theorem]{Example}

\begin{document}
$\,$\vspace{0mm}

\begin{center}
{\sf\huge Bethe's Quantum Numbers And}
\vspace{3mm}\\
{\sf\Huge Rigged Configurations}
\vspace{15mm}\\
{\textsf{\LARGE  ${}^{\mbox{\small a,b}}$Anatol N. Kirillov and
${}^{\mbox{\small c}}$Reiho Sakamoto}}
\vspace{5mm}\\
{\textsf {${}^{\mbox{\small{a}}}$Research Institute for Mathematical Sciences,}}
\vspace{-1mm}\\
{\textsf {Kyoto University, Sakyo-ku,}}
\vspace{-1mm}\\
{\textsf {Kyoto, 606-8502, Japan}}
\vspace{-1mm}\\
{\textsf {kirillov@kurims.kyoto-u.ac.jp}}
\vspace{4mm}\\
{\textsf {${}^{\mbox{\small{b}}}$The Kavli Institute for the Physics and}}
\vspace{-1mm}\\
{\textsf {Mathematics of the Universe (IPMU),}}
\vspace{-1mm}\\
{\textsf {The University of Tokyo,}}
\vspace{-1mm}\\
{\textsf {Kashiwa, Chiba, 277-8583, Japan}}
\vspace{4mm}\\
{\textsf {${}^{\mbox{\small{c}}}$Department of Physics,}}
\vspace{-1mm}\\
{\textsf {Tokyo University of Science, Kagurazaka,}}
\vspace{-1mm}\\
{\textsf {Shinjuku, Tokyo, 162-8601, Japan}}
\vspace{-1mm}\\
{\textsf {reiho@rs.tus.ac.jp}}
\vspace{-1mm}\\
{\textsf {reihosan@08.alumni.u-tokyo.ac.jp}}
\end{center}
\vspace{15mm}

\begin{abstract}
\noindent
We propose a method to determine the quantum numbers, which we call the rigged configurations,
for the solutions to the Bethe ansatz equations for the spin-1/2 isotropic Heisenberg model
under the periodic boundary condition.
Our method is based on the observation that the sums of Bethe's quantum numbers
within each string behave particularly nicely.
We confirm our procedure for all solutions for length 12 chain (totally 923 solutions).\medskip

\noindent
Keywords: Heisenberg model, Bethe ansatz, rigged configurations.
\end{abstract}

\pagebreak

\section{Introduction}
Bethe's seminal solution to the isotropic Heisenberg model
under the periodic boundary condition \cite{Bethe} in 1931
is one of the prototypical theories on the quantum integrable systems.
Basic procedure of the algebraic Bethe ansatz is as follows
(see, e.g., \cite{Faddeev,KorepinBook}).
First, we solve the set of algebraic equations called the Bethe ansatz equations
(see equation (\ref{eq:Bethe_ansatz}) in the main text).
Next, by using the solutions to the Bethe ansatz equations,
we construct eigenstates of the Hamiltonian (see equation (\ref{eq:Bethe_vector})).
The main problem which we will consider in the present paper is to show that
the whole procedure involved in the Bethe ansatz method is mathematically well-defined.
We pursue the problem in the same setting as was treated by Bethe.

For a long time it has been observed that there are several subtle points about the procedure.
One of such obstructions is the problem called the singular solutions
(see equation (\ref{eq:general_singular_solution})).
Recently we have much progress on this issue (see
\cite{AV,Beisert,NepomechieWang2013,HNS1,KiSa14,KirillovSakamoto2014b})
and now we attain fairly good understanding of the problem.
However, the very structure of the solutions to the Bethe ansatz equations itself
still remains in rather foggy situation.

In the analysis of the solutions to the Bethe ansatz equations,
it has been customary to assume that roots of the solutions take a particular form called the strings.
This idea was already apparent in the original Bethe's paper.
However, as we will explain shortly after, the notion ``string" has rather elusive nature
and for a long time it has been a difficult problem to understand how to utilize the string structure
in a proper manner.
One of the well-known attempts on this problem is Takahashi's theory \cite{Takahashi}.
Let us explain in more detail.
Let $N$ be the length of the spin chain and let $\ell$ be the number of down spins.
Then a main assumption in the derivation of Takahashi's quantum number is to suppose that the strings
take exactly the following form:\footnote{We want to point out that the assumption of the validity
of the string shape (\ref{eq:string}) allows to guess some explicit expressions for $q$-weight
multiplicities (Kostka--Foulkas polynomials) appearing in the representation theory of Lie algebras
of type $A$. All these formulas have been proven rigorously in \cite{KR} and lead to applications in
combinatorics, representation theory and discrete integrable systems.}
\begin{align}
\label{eq:string}
a+bi,\,a+(b-1)i,\,a+(b-2)i,\,\ldots,\,a-bi,\qquad
(a\in\mathbb{R},\,b\in\mathbb{Z}_{\geq 0}/2).
\end{align}

However, when $\ell\geq 3$, the assumption (\ref{eq:string})
has serious difficulty.
Indeed, as we can easily see in examples, the real part $a$ as well as the intervals
between successive roots are not unique if the lengths of the strings are larger than 2.
As the result, Takahashi's quantum numbers are not uniquely defined and also
they are not half-integers.
In a previous paper \cite{Sakamoto2015}, we proposed to seek an alternative to
Takahashi's quantum numbers.
In particular, in that paper we showed that the correct quantum number
for the exceptional real solution \cite{EKS:1992} is different from the one
derived by Takahashi's quantum numbers.

In the present paper, as a continuation of a previous work \cite{Sakamoto2015},
we propose a method to assign quantum numbers to strings of roots.
For this purpose, we start from the so-called Bethe's quantum numbers (see Section \ref{sec:bethe}).
Here, roughly speaking, Bethe's quantum numbers appear as phase factors
after taking logarithm of the Bethe ansatz equations.
By definition, Bethe's quantum numbers are uniquely defined and exactly integers or half-integers.
Then our basic observation is that the {\it sum of Bethe's quantum numbers} associated with
all roots of a given string behaves in a particularly simple manner.
This is a remarkable property since individual Bethe's quantum numbers
behave in a rather complicated way.
Based on this observation we propose a method to determine complete set of
the quantum numbers, which we call the rigged configurations,
from Bethe's quantum numbers.
We confirmed these observations for all solutions of $N=12$ case by using
the numerical data given in \cite{GD2,HNS1}.

As we noted previously, we do not have thorough understanding of the string
pattern which appear in the solutions to the Bethe ansatz equations.
However, our main aim in the present paper is to propose a method
which could be made mathematically rigorous once we obtain sound understanding
of the string structure.
Our expectation relies on the fact that Bethe's numbers are uniquely defined
and exactly half-integers.

To our best knowledge, Bethe's quantum numbers had been introduced and studied
in the original paper by Bethe \cite{Bethe}, formulas (37a) and (37b),
and has been studied more recently in \cite{LinR} in their study of the set of
real solutions to the Bethe ansatz equations for $\ell=2$, and in \cite{HC}
concerning the deviation of solutions to the Bethe ansatz equations from having
exact string structure with numerical evidence for $N=10$.

The outline of the present paper is as follows.
In Section \ref{sec:ABA}, we provide necessary facts about the Bethe ansatz method
and the rigged configurations.
In Section \ref{sec:bethe} we collect necessary facts about Bethe's quantum numbers.
In Section \ref{sec:main} we provide the main algorithm.
We provide examples in Section \ref{sec:examples}.
Finally we conclude in Section \ref{sec:conclusion}.
Besides the main article, there are supplementary tables (see remarks after
Conjecture \ref{conj:main}).\footnote{
After completing the present work, we noticed that the paper \cite{DG2} appeared.
In this paper the authors also study Bethe's quantum numbers.
However our work is significantly different from theirs since our main motivation
is to understand the situation which appear when $\ell\geq 3$
whereas their paper considers the case $\ell=2$ exclusively.
Note that by a result of \cite{Vla} the real part ``$a$" of the string (\ref{eq:string})
is uniquely defined when $\ell=2$ (except for the exceptional real solutions).}

\section{Algebraic Bethe ansatz}
\label{sec:ABA}
In this section we briefly overview some basic facts concerning the spin-1/2
isotropic Heisenberg model, also known as the XXX model.
The space of states of the spin-1/2 XXX chain of length $N$ is the complex space $\mathbb{C}^{2^N}$,
which one can identify with the tensor product of $N$ copies of $\mathbb{C}^2$, namely
\begin{align}
\mathfrak{H}_N &= \bigotimes_{j=1}^{N} V_j,\quad V_j \simeq {\mathbb {C}}^2.
\end{align}
The Hamiltonian describing the interaction of particles has the form
\begin{align}
\mathcal{H}_N &=  \frac{J}{4} \sum_{k=1}^{N}( \sigma_{k}^{x} \sigma_{k+1}^{x}
+ \sigma_{k}^{y} \sigma_{k+1}^{y}+ \sigma_{k}^{z} \sigma_{k+1}^{z} -{\mathbb {I}}_N),
\end{align}
where we assume $\sigma_{N+1}^{a}=\sigma_{1}^{a}$, $a\in\{x,y,z\}$, and
\begin{align}
\sigma^{x} = 
\left(\!
\begin{array}{cc}
0&1\\
1&0
\end{array}
\!\right),\qquad
\sigma^{y} = 
\left(\!
\begin{array}{cc}
0&-i\\
i&0
\end{array}
\!\right),\qquad
\sigma^{z} = 
\left(\!
\begin{array}{cc}
1&0\\
0&-1
\end{array}
\!\right)
\end{align}stand for the Pauli matrices, and
\begin{align}
\sigma_{k}^{a}= I \otimes \cdots \otimes
\underbrace{\sigma^{a}}_{k}
\otimes \cdots \otimes I.
\end{align}
Let us introduce the vectors
\begin{align}
v_+=
\left(\!
\begin{array}{c}
1\\0
\end{array}
\!\right)
\text{ and }
v_-=
\left(\!
\begin{array}{c}
0\\1
\end{array}
\!\right)
\end{align}
and the global vacuum (ground state) as
\begin{align}
|0\rangle_N=v_+\otimes\cdots\otimes v_+
\in\mathfrak{H}_N.
\end{align}

Let us consider the local transfer matrix
\begin{align}
\label{eq:def_L_of_Bethe}
L_{k}(\lambda)=
\left(\!
\begin{array}{cc}
\lambda \mathbb{I}_N+\frac{i}{2}\sigma^z_k & \frac{i}{2}\sigma^-_k\\
\frac{i}{2}\sigma^+_k & \lambda \mathbb{I}_N-\frac{i}{2}\sigma^z_k
\end{array}
\!\right)
\end{align}
where $\sigma^\pm_k =\sigma^x_k\pm i\sigma^y_k$, and consider the monodromy matrix
\begin{align}
T_N(\lambda)=L_N(\lambda)L_{N-1}(\lambda)\cdots L_1(\lambda)=
\left(
\begin{array}{cc}
A_N(\lambda) & B_N(\lambda)\\
C_N(\lambda) & D_N(\lambda)
\end{array}
\right).
\end{align}
Recall that the local transfer matrices satisfy  the relation
$R_{12}L_1L_2=L_2L_1R_{12}$ for some matrix $R\in\mathbb{C}^2\otimes\mathbb{C}^2$
where $R$ satisfies the quantum Yang--Baxter equation
\begin{align}
\label{eq:YBE}
R_{12}R_{13}R_{23}=R_{23}R_{13}R_{12}.
\end{align}
The quantum Yang--Baxter equation (\ref{eq:YBE}) implies that the transfer matrices
$\operatorname{tr}|_{\mathbb{C}_0^2}T_N(\lambda)$
and $\operatorname{tr}|_{\mathbb{C}_0^2}T_N(\mu)$ ($\mathbb{C}_0^2$ is the auxiliary space)
commute for different values of $\lambda$ and $\mu$.
As a consequence of the relation (\ref{eq:YBE}), we can deduce that $[B_N(\lambda),B_N(\mu)]=0$
for different values of the parameters $\lambda$ and $\mu$.
The key observation due to Bethe, reformulated in the language of the algebraic Bethe ansatz, is that the vector
\begin{align}
\label{eq:Bethe_vector}
\Psi_N(\lambda_1,\ldots,\lambda_\ell)
=B_N(\lambda_1)\cdots B_N(\lambda_\ell)|0\rangle_N
\end{align}
is an eigenstates of the Hamiltonian $\mathcal{H}_N$ if and only if the parameters $\lambda_1,\ldots,\lambda_\ell$
satisfy the following system of algebraic equations, known as the Bethe ansatz equations
($\operatorname{BAE}(\ell)$ for short)
\begin{align}
\label{eq:Bethe_ansatz}
\left(
\frac{\lambda_k+\frac{i}{2}}{\lambda_k-\frac{i}{2}}
\right)^N
=\prod_{j=1 \atop j\neq k}^\ell
\frac{\lambda_k-\lambda_j+i}{\lambda_k-\lambda_j-i},
\qquad
(k=1,\cdots,\ell).
\end{align}

We are interested in the so-called physical solutions to $\operatorname{BAE}(\ell)$.
The set of physical solutions is divided into the regular solutions, that is,
the solutions $(\lambda_1,\ldots,\lambda_\ell)$ which do not contain a pair
$(\lambda_\alpha,\lambda_\beta)=(\frac{i}{2},-\frac{i}{2})$,
and the set of physical singular solutions of the form
\begin{align}
\label{eq:general_singular_solution}
\lambda=
\left\{
\frac{i}{2},-\frac{i}{2},\lambda_3,\ldots,\lambda_\ell
\right\}.
\end{align}
In the case of the regular solutions, one has
\begin{align}
\mathcal{H}_N\Psi_N(\lambda_1,\ldots,\lambda_\ell)=
\mathcal{E}_{\lambda_1,\ldots,\lambda_\ell}
\Psi_N(\lambda_1,\ldots,\lambda_\ell),\qquad
\mathcal{E}_{\lambda_1,\ldots,\lambda_\ell}:=-\frac{J}{2}\sum_{j=1}^\ell\frac{1}{\lambda_j^2+\frac{1}{4}}.
\end{align}
However for singular solutions, the energy $\mathcal{E}_{\lambda_1,\ldots,\lambda_\ell}$ is not defined.
It was suggested to ``resolve" singularity of $\mathcal{E}_{\lambda_1,\ldots,\lambda_\ell}$
by using a deformed form of singular solutions, namely,
\begin{align}
\label{eq:regularization_of_Nepomechie}
\widetilde{\lambda}_1=\frac{i}{2}+\epsilon+c\,\epsilon^N,\qquad
\widetilde{\lambda}_2=-\frac{i}{2}+\epsilon,
\end{align}
and consider the limit
\begin{align}
\lim_{\epsilon\rightarrow 0}\frac{1}{\epsilon^N}
B_N(\widetilde{\lambda}_1)B_N(\widetilde{\lambda}_2)
B_N(\lambda_3)\cdots B_N(\lambda_\ell)
=\widetilde{\Psi}_\lambda.
\end{align}
It was conjecture by \cite{NepomechieWang2013}, that $\widetilde{\Psi}_\lambda\neq 0$
if and only if
\begin{align}
\label{eq:NWcondition}
\left(
-\prod_{j=3}^\ell
\frac{\lambda_j+\frac{i}{2}}{\lambda_j-\frac{i}{2}}
\right)^N=1,
\end{align}
and
\begin{align}
c=2i^{N+1}\prod_{j=3}^\ell
\frac{\lambda_j+\frac{3i}{2}}{\lambda_j-\frac{i}{2}}.
\end{align}
We call the singular solutions which satisfy the condition (\ref{eq:NWcondition}) the physical singular solution.
The corresponding energy is \cite{KirillovSakamoto2014b}
\[
\mathcal{E}_\lambda=-J-\frac{J}{2}\sum_{j=3}^\ell\frac{1}{\lambda_j^2+\frac{1}{4}}.
\]
It is also conjectured \cite{HNS1} that the set of physical singular and regular solutions to
$\operatorname{BAE}(\ell)$ exhausts the set of all physical solutions to $\operatorname{BAE}(\ell)$.

Finally, let us define the rigged configurations.
The rigged configurations $(\nu,I)$ are comprised of a partition
$\nu=(\nu_1,\ldots,\nu_l)$, called the configuration, and a sequence of
non-negative integers $I_j$, called the riggings, satisfying the following condition.
Define the vacancy number $P_{k}(\nu)$ by the following formula
\begin{align}
P_{k}(\nu)=N-2(\nu'_1+\cdots +\nu'_k)
=N-2\sum_{j=1}^l\min(k,\nu_j)
\end{align}
where $\nu'_j$ represent components of the transposed partition $\nu'$.
If we regard the rigged configuration as the set of pairs
$(\nu,I)=\{(\nu_1,I_1),\ldots,(\nu_l,I_l)\}$, then the riggings must satisfy
\begin{align}
0\leq I_j\leq P_{\nu_j}(\nu)
\end{align}
for all $j=1,\ldots,l$.
We note that in the rigged configuration theory, the order of the riggings associated with
rows of the configuration $\nu$ with the same length is not essential.
Therefore if we have the subset
$\{(\nu_k,I_{k,1}),\ldots,(\nu_k,I_{k,m})\}$ within $(\nu,I)$,
we assume that $I_{k,1}\leq\cdots\leq I_{k,m}$ to erase ambiguity.

In our earlier paper \cite{KiSa14} we conjectured that there is a one to one
correspondence between the set of physical solutions to $\operatorname{BAE}(\ell)$
and the set of the rigged configurations $(\nu,I)$ where the Young diagrams
representing the partitions $\nu$ have $\ell$ cells.
Let us define the flip operator on the set of physical solutions to $\operatorname{BAE}(\ell)$
as follows
\begin{align}
\label{eq:flip}
\kappa:\{\lambda_1,\lambda_2,\ldots,\lambda_\ell\}
\longmapsto
\{-\lambda_1,-\lambda_2,\ldots,-\lambda_\ell\}.
\end{align}
Then another conjecture in \cite{KiSa14} states that the corresponding rigged configuration
are connected by the following operation on the rigged configuration $(\nu,I)$.
Namely, we replace each element $(\nu_j,I_j)$ by $(\nu_j,P_{\nu_j}(\nu)-I_j)$
and reorder the riggings if necessary.
This property is useful in the following discussion.

\section{Bethe's quantum numbers}
\label{sec:bethe}
In what follows, we use function $\Arctan(z)$, defined as an analytic continuation
of the function $\arctan(x)$ ($x\in\mathbb{R}$, $\arctan(0)=0$)
to all complex plane with branch cuts $(i,+i\infty)$ and $(-i,-i\infty)$ along with the imaginary axis.
For example, $\Arctan(xi)=i\operatorname{arctanh}(x)$
and $\Arctan(-xi)=-i\operatorname{arctanh}(x)$ for $x>1$.
If we go across the branch cuts, we have
\begin{align*}
&\Arctan(xi)-\lim_{\epsilon\rightarrow 0}\Arctan(xi-\epsilon)=\pi,\\
&\Arctan(-xi)-\lim_{\epsilon\rightarrow 0}\Arctan(-xi+\epsilon)=-\pi
\end{align*}
for $x>1$ and $\epsilon>0$.

We use the following formula
\begin{align}
\label{eq:arctan}
\log\frac{z-i}{z+i}=2i\Arctan (z)+(2n+1)\pi i
\qquad
(n\in\mathbb{Z}).
\end{align}
We verify the formula as follows.
We see that the differentiations of the both sides give $\frac{2i}{z^2+1}$.
In order to determine the remaining constant,
we compare the values of the both sides at $z=0$ as follows;
\begin{align*}
&\log\frac{0+i}{0-i}=\log(-1)=\log (e^{\pi i}\cdot e^{2n\pi i})=(2n+1)\pi i
\qquad
(n\in\mathbb{Z}),\\
&2i\Arctan(0)=0.
\end{align*}

\subsection{Regular case}
First let us consider a regular solution.
By using the formula (\ref{eq:arctan}),
we take the logarithm of the Bethe ansatz equations (\ref{eq:Bethe_ansatz});
\begin{align*}
&\log\left[
\left(\frac{\lambda_k+\frac{i}{2}}{\lambda_k-\frac{i}{2}}\right)^N
\prod^\ell_{j=1\atop j\neq k}
\frac{\lambda_k-\lambda_j-i}{\lambda_k-\lambda_j+i}
\right]\\
&=-N\cdot 2i\Arctan(2\lambda_k)
+\sum^\ell_{j=1\atop j\neq k}
2i\Arctan (\lambda_k-\lambda_j)
+(-N+\ell-1)\pi i+2n\pi i
\quad
(n\in\mathbb{Z}).
\end{align*}
By the Bethe ansatz equations, this should coincide with
$\log(1)=2m\pi i$ for $m\in\mathbb{Z}$.
Writing $2J_k=2n-2m-N+\ell-1$, we obtain
\begin{align}
\label{eq:bethe}
J_k
=\frac{N}{2\pi}
\left(
2\Arctan(2\lambda_k)
-\frac{2}{N}\sum^\ell_{j =1 \atop j\neq k}\Arctan(\lambda_k-\lambda_j)
\right).
\end{align}
We call the half integers $J_k$ the {\bf Bethe's quantum numbers}.

Here we describe the basic property of the Bethe's quantum numbers in relation
with the multiplication by $(-1)$.
\begin{proposition}
Suppose that we have the following two solutions of the Bethe ansatz equations
\begin{align*}
&\{\lambda_1,\lambda_2,\ldots,\lambda_\ell\},\\
&\{\widetilde{\lambda}_1,\widetilde{\lambda}_2,\ldots,\widetilde{\lambda}_\ell\}
:=\{-\lambda_1,-\lambda_2,\ldots,-\lambda_\ell\}.
\end{align*}
Let $J_\alpha$ (resp. $\widetilde{J}_\alpha$) be the Bethe's quantum number
corresponding to $\lambda_\alpha$ (resp. $\widetilde{\lambda}_\alpha$).
Then we have $J_\alpha=-\widetilde{J}_\alpha$.
\end{proposition}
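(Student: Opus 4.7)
The key observation is that the function $\Arctan(z)$ is odd on its domain of definition. It is odd on the real axis by construction ($\arctan(0)=0$ and $\arctan$ is odd on $\mathbb{R}$), and the branch cuts $(i,+i\infty)$ and $(-i,-i\infty)$ are placed symmetrically about the origin, so analytic continuation preserves oddness: $\Arctan(-z)=-\Arctan(z)$ for every $z$ in the slit plane. The example values $\Arctan(xi)=i\operatorname{arctanh}(x)$ and $\Arctan(-xi)=-i\operatorname{arctanh}(x)$ cited in the paper give an explicit check of this on the imaginary axis.

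Next, I would verify that $\{\widetilde{\lambda}_\alpha\}=\{-\lambda_\alpha\}$ is indeed a solution to the Bethe ansatz equations whenever $\{\lambda_\alpha\}$ is. Substituting $\lambda_k\mapsto -\lambda_k$ in equation (\ref{eq:Bethe_ansatz}) replaces $(\lambda_k+i/2)/(\lambda_k-i/2)$ by its reciprocal, and likewise replaces $(\lambda_k-\lambda_j+i)/(\lambda_k-\lambda_j-i)$ by its reciprocal; so both sides of $\operatorname{BAE}(\ell)$ are simultaneously inverted, and the system remains satisfied. In particular, $\widetilde{J}_\alpha$ is well-defined by (\ref{eq:bethe}).

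The proposition then follows by direct substitution into (\ref{eq:bethe}): using oddness of $\Arctan$,
\begin{align*}
\widetilde{J}_k
&=\frac{N}{2\pi}\left(2\Arctan(2\widetilde{\lambda}_k)-\frac{2}{N}\sum_{j\neq k}\Arctan(\widetilde{\lambda}_k-\widetilde{\lambda}_j)\right)\\
&=\frac{N}{2\pi}\left(2\Arctan(-2\lambda_k)-\frac{2}{N}\sum_{j\neq k}\Arctan(-\lambda_k+\lambda_j)\right)\\
&=-\frac{N}{2\pi}\left(2\Arctan(2\lambda_k)-\frac{2}{N}\sum_{j\neq k}\Arctan(\lambda_k-\lambda_j)\right)=-J_k.
\end{align*}

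The only potential obstacle is the branch-cut bookkeeping for $\Arctan$: one must confirm that none of the arguments $2\lambda_k$, $\lambda_k-\lambda_j$ lie exactly on the slit, so that the odd-function identity applies unambiguously at both $\lambda_k$ and $-\lambda_k$. For a genuine solution of $\operatorname{BAE}(\ell)$, the arguments of $\Arctan$ that arise in (\ref{eq:bethe}) avoid the open rays $(i,+i\infty)\cup(-i,-i\infty)$ (otherwise the relevant logarithm in the derivation is undefined at the chosen branch); hence the identity $\Arctan(-z)=-\Arctan(z)$ applies term by term, and the argument above is rigorous.
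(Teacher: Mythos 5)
Your proof is correct and follows essentially the same route as the paper, whose entire argument is the one-line observation that $\Arctan(-z)=-\Arctan(z)$ applied to equation (\ref{eq:bethe}). The extra care you take---verifying that the negated set is again a solution of $\operatorname{BAE}(\ell)$ and checking that the arguments of $\Arctan$ avoid the branch cuts---is sensible bookkeeping the paper leaves implicit, but it does not change the approach.
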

\begin{proof}
Since $\Arctan(-z)=-\Arctan(z)$, we obtain the result by (\ref{eq:bethe}).
\end{proof}

\subsection{Singular solutions case}
Now let us consider the singular solution (\ref{eq:general_singular_solution})
when $N$ is even.
The computation for $J_k$ ($k\geq 3$) is same with the previous case.
For the computations of $J_1$ and $J_2$, we should be reminded that
the function $\Arctan(z)$ has the logarithmic singularities at $z=\pm i$.
Recall that we assume $\epsilon\in\mathbb{R}$ in (\ref{eq:regularization_of_Nepomechie}).
We introduce the deformation of singular solutions
\begin{align*}
\lambda_1(\theta)
&=\frac{i}{2}+(re^{i\theta})+c(re^{i\theta})^N,\\
\lambda_2(\theta)
&=-\frac{i}{2}+(re^{i\theta})
\end{align*}
where $r\in\mathbb{R}_{>0}$ and study the limits of
\begin{align*}
b_1(\theta)
&=
\frac{N}{2\pi}
\left[
2\Arctan(2\lambda_1(\theta))-\frac{2}{N}\Arctan(\lambda_1(\theta)-\lambda_2(\theta))
\right],\\
b_2(\theta)
&=
\frac{N}{2\pi}
\left[
2\Arctan(2\lambda_2(\theta))-\frac{2}{N}\Arctan(\lambda_2(\theta)-\lambda_1(\theta))
\right]
\end{align*}
when $r\rightarrow 0$.
As a result we obtain the Bethe numbers $J_1(\theta)$ and $J_2(\theta)$
which depend on the choice of $\theta$.
The following two graphs show $J_1(\theta)$ (left) and $J_2(\theta)$ (right)
for the solution $\{i/2,-i/2\}$ of $N=12$.
\begin{center}
\includegraphics[width=70mm]{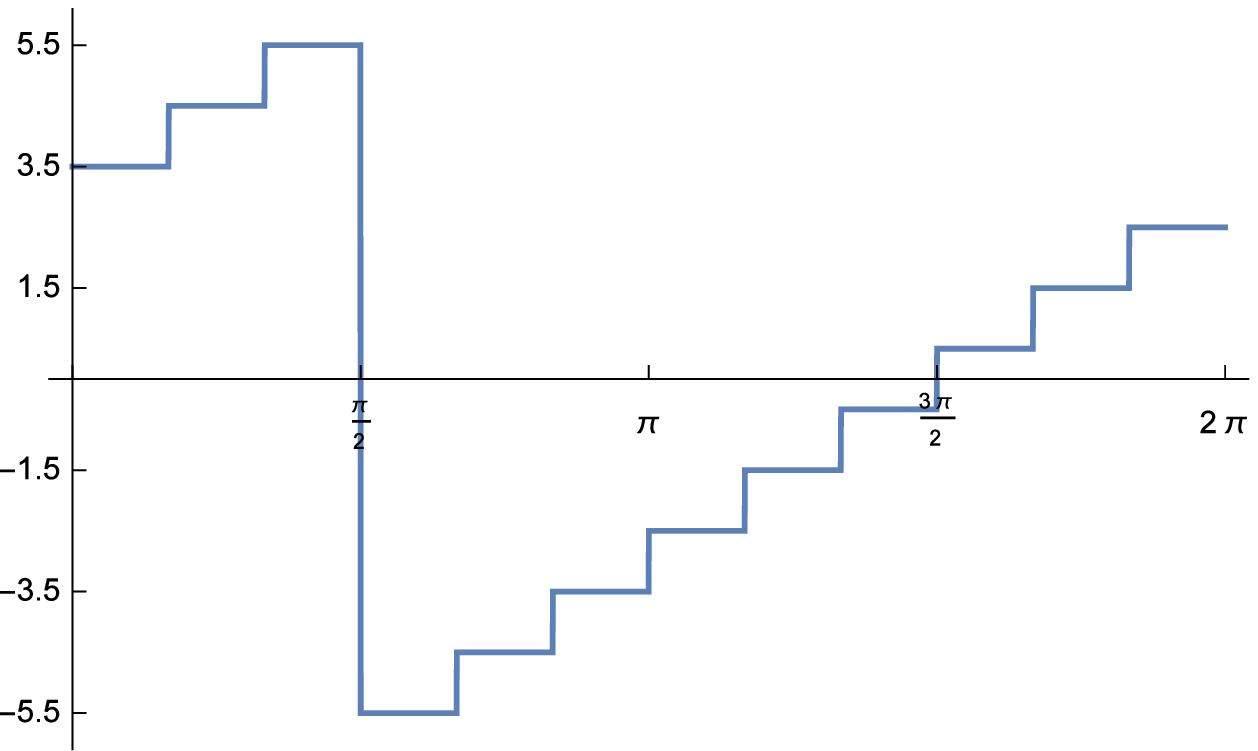}
\quad
\includegraphics[width=70mm]{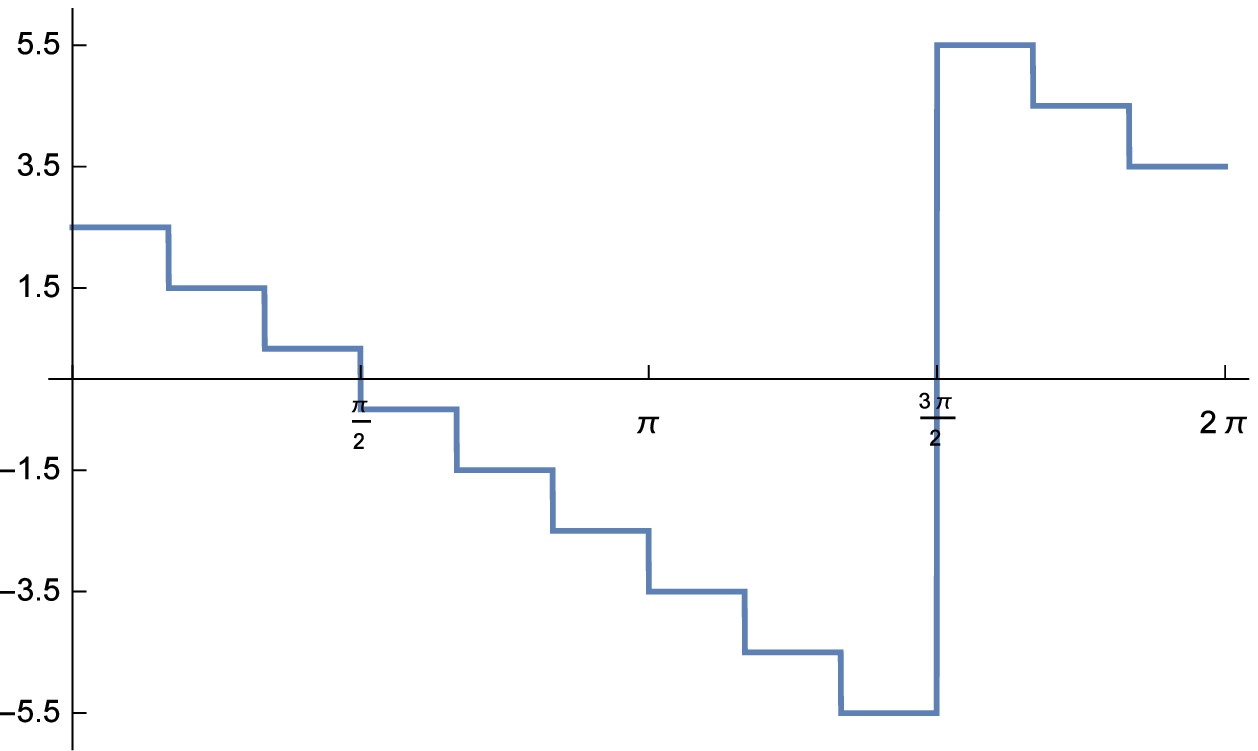}
\end{center}
We note that the sum $J_1(\theta)+J_2(\theta)$ is constant for $-\frac{\pi}{2}<\theta<\frac{\pi}{2}$.

\section{Main algorithm}
\label{sec:main}
We start from a given physical solution
$\lambda=(\lambda_1,\ldots,\lambda_\ell)$
to the Bethe ansatz equations.
We divide $\lambda$ into subsets called strings.
Here we say that $S=\{\eta_1,\ldots,\eta_l\}$, $(\eta_j=a_j+ib_j)$ forms a string if
\[
a_j-a_{j+1}\approx 0,\qquad
b_j-b_{j+1}\approx 1\qquad
(j=1,\ldots,l-1).
\]
By a result of Vladimirov \cite{Vla} one can assume that $\eta_1=\overline{\eta}_l$
if $l>1$.
We define the real part of the string by $\operatorname{Re}(S)=\operatorname{Re}(\eta_1)$.

We take a component of the solution $\lambda$ to the Bethe ansatz equations with maximal
imaginary part and starting from that component we form a string of maximal length.
To proceed, we erase the string created, and apply the above procedure for the remaining components
of the solution $\lambda$.
As a result we partition all component of the solution $\lambda$ into a collection of distinct strings.

One of the main ingredients of our algorithm is the sum of the Bethe numbers corresponding
to a string $S$
\begin{align}
J(S)=\sum_{\eta\in S}J_\eta.
\end{align}
A striking feature of the quantity $J(S)$ is that it exhibits very stable behavior
even if individual Bethe numbers $J_\eta$ behave in a complicated way.

\begin{example}
Let us think about solutions of $N=12$ which contain a 1-string and a 5-string.
We arrange the solutions according to the real parts of the 5-strings.
 We depict the solutions on the complex plane with the Bethe's quantum numbers for each root.
\begin{center}
\unitlength 15pt
\begin{picture}(8.5,9)(-4,-4)
\put(-1.59367,0){\circle*{0.3}}
\put(0.335887,3.18301){\circle*{0.3}}
\put(0.307302,1.5){\circle*{0.3}}
\put(0.307295,0){\circle*{0.3}}
\put(0.307302,-1.5){\circle*{0.3}}
\put(0.335887,-3.18301){\circle*{0.3}}
\put(-4,4){\#1}
\put(0.8,3.2){$7/2$}
\put(0.8,1.7){$11/2$}
\put(0.8,0.2){$5/2$}
\put(0.8,-1.5){$13/2$}
\put(0.8,-3.1){$9/2$}
\put(-4,0.2){$-5/2$}
\put(-4,0){\vector(1,0){8}}
\put(0,-4){\vector(0,1){8}}
\end{picture}
\begin{picture}(8.5,9)(-4,-4)
\put(-0.82796,0){\circle*{0.3}}
\put(0.163391,3.23087){\circle*{0.3}}
\put(0.167061,1.50001){\circle*{0.3}}
\put(0.167056,0){\circle*{0.3}}
\put(0.167061,-1.50001){\circle*{0.3}}
\put(0.163391,-3.23087){\circle*{0.3}}
\put(-4,4){\#2}
\put(0.8,3.2){$13/2$}
\put(0.8,1.7){$7/2$}
\put(0.8,0.2){$1/2$}
\put(0.8,-1.5){$9/2$}
\put(0.8,-3.1){$15/2$}
\put(-3.2,0.2){$-3/2$}
\put(-4,0){\vector(1,0){8}}
\put(0,-4){\vector(0,1){8}}
\end{picture}
\begin{picture}(8.5,9)(-4,-4)
\put(-0.462803,0){\circle*{0.3}}
\put(0.0880767,3.25333){\circle*{0.3}}
\put(0.0955479,1.50001){\circle*{0.3}}
\put(0.0955538,0){\circle*{0.3}}
\put(0.0955479,-1.50001){\circle*{0.3}}
\put(0.0880767,-3.25333){\circle*{0.3}}
\put(-4,4){\#3}
\put(0.8,3.2){$13/2$}
\put(0.8,1.7){$9/2$}
\put(0.8,0.2){$-3/2$}
\put(0.8,-1.5){$11/2$}
\put(0.8,-3.1){$15/2$}
\put(-2.8,0.2){$-1/2$}
\put(-4,0){\vector(1,0){8}}
\put(0,-4){\vector(0,1){8}}
\end{picture}
\end{center}

\begin{center}
\unitlength 15pt
\begin{picture}(8.5,9)(-4,-4)
\put(-0.212095,0){\circle*{0.3}}
\put(0.0396765,3.26352){\circle*{0.3}}
\put(0.0442383,1.50002){\circle*{0.3}}
\put(0.0442652,0){\circle*{0.3}}
\put(0.0442383,-1.50002){\circle*{0.3}}
\put(0.0396765,-3.26352){\circle*{0.3}}
\put(-4,4){\#4}
\put(0.8,3.2){$13/2$}
\put(0.8,1.7){$9/2$}
\put(0.8,0.2){$-3/2$}
\put(0.8,-1.5){$11/2$}
\put(0.8,-3.1){$15/2$}
\put(-2,0.2){$1/2$}
\put(-4,0){\vector(1,0){8}}
\put(0,-4){\vector(0,1){8}}
\end{picture}
\begin{picture}(8.5,9)(-4,-4)
\put(-0.00281903,0){\circle*{0.3}}
\put(0.,3.26659){\circle*{0.3}}
\put(0.,1.50012){\circle*{0.3}}
\put(0.00281903,0){\circle*{0.3}}
\put(0.,-1.50012){\circle*{0.3}}
\put(0.,-3.26659){\circle*{0.3}}
\put(-4,4){\#5}
\put(0.8,3.2){$9/2$}
\put(0.8,1.7){$11/2$}
\put(0.8,0.2){$-3/2$}
\put(0.8,-1.5){$-11/2$}
\put(0.8,-3.1){$-9/2$}
\put(-2,0.2){$3/2$}
\put(-4,0){\vector(1,0){8}}
\put(0,-4){\vector(0,1){8}}
\end{picture}
\begin{picture}(8.5,9)(-4,-4)
\put(0.212095,0){\circle*{0.3}}
\put(-0.0396765,3.26352){\circle*{0.3}}
\put(-0.0442383,1.50002){\circle*{0.3}}
\put(-0.0442652,0){\circle*{0.3}}
\put(-0.0442383,-1.50002){\circle*{0.3}}
\put(-0.0396765,-3.26352){\circle*{0.3}}
\put(-4,4){\#6}
\put(-3,3.2){$-15/2$}
\put(-3,1.7){$-11/2$}
\put(-2.4,0.2){$3/2$}
\put(-3,-1.5){$-9/2$}
\put(-3,-3.1){$-13/2$}
\put(0.8,0.2){$-1/2$}
\put(-4,0){\vector(1,0){8}}
\put(0,-4){\vector(0,1){8}}
\end{picture}
\end{center}

\begin{center}
\unitlength 15pt
\begin{picture}(8.5,9)(-4,-4)
\put(0.462803,0){\circle*{0.3}}
\put(-0.0880767,3.25333){\circle*{0.3}}
\put(-0.0955479,1.50001){\circle*{0.3}}
\put(-0.0955538,0){\circle*{0.3}}
\put(-0.0955479,-1.50001){\circle*{0.3}}
\put(-0.0880767,-3.25333){\circle*{0.3}}
\put(-4,4){\#7}
\put(-3,3.2){$-15/2$}
\put(-3,1.7){$-11/2$}
\put(-2.4,0.2){$3/2$}
\put(-3,-1.5){$-9/2$}
\put(-3,-3.1){$-13/2$}
\put(1,0.2){$1/2$}
\put(-4,0){\vector(1,0){8}}
\put(0,-4){\vector(0,1){8}}
\end{picture}
\begin{picture}(8.5,9)(-4,-4)
\put(0.82796,0){\circle*{0.3}}
\put(-0.163391,3.23087){\circle*{0.3}}
\put(-0.167061,1.50001){\circle*{0.3}}
\put(-0.167056,0){\circle*{0.3}}
\put(-0.167061,-1.50001){\circle*{0.3}}
\put(-0.163391,-3.23087){\circle*{0.3}}
\put(-4,4){\#8}
\put(-3,3.2){$-15/2$}
\put(-3,1.7){$-9/2$}
\put(-3,0.2){$-1/2$}
\put(-3,-1.5){$-7/2$}
\put(-3,-3.1){$-13/2$}
\put(1.5,0.2){$3/2$}
\put(-4,0){\vector(1,0){8}}
\put(0,-4){\vector(0,1){8}}
\end{picture}
\begin{picture}(8.5,9)(-4,-4)
\put(1.59367,0){\circle*{0.3}}
\put(-0.335887,3.18301){\circle*{0.3}}
\put(-0.307302,1.5){\circle*{0.3}}
\put(-0.307295,0){\circle*{0.3}}
\put(-0.307302,-1.5){\circle*{0.3}}
\put(-0.335887,-3.18301){\circle*{0.3}}
\put(-4,4){\#9}
\put(-3,3.2){$-9/2$}
\put(-3,1.7){$-13/2$}
\put(-3,0.2){$-5/2$}
\put(-3,-1.5){$-11/2$}
\put(-3,-3.1){$-7/2$}
\put(2.2,0.2){$5/2$}
\put(-4,0){\vector(1,0){8}}
\put(0,-4){\vector(0,1){8}}
\end{picture}
\end{center}
Let $S$ be the 5-string of each solution.
Then for solutions \#1,\ldots,\#4, we have $\operatorname{Re}(S)>0$ and $J(S)=45/2$.
For solutions \#6,\ldots,\#9, we have $\operatorname{Re}(S)<0$ and $J(S)=-45/2$.
Note that individual Bethe's quantum numbers behave in a rather complicated way.
\qed
\end{example}

In solution \#5 of the above example, we have the roots $\pm a$ ($a\in\mathbb{R}$, $a\ll 1$) near the origin.
Although we cannot decide which root should belong to the 5-string,
this ambiguity does not affect the following procedure.
See solutions \#6 and \#16 of Section \ref{sec:examples} for another type of such ambiguity
which does not affect the procedure also.

\subsection{Configuration}
Starting from the solution $\lambda$, we divide it into strings $S_1,\ldots,S_p$.
Let the lengths of $S_k$ be $l_k$.
Then the partition $\nu$ which appears in the rigged configuration $(\nu,I)$
is $\nu=(l_1,\ldots,l_p)^+$.
Here the symbol $(l_1,\ldots,l_p)^+$ denotes the decreasing order of numbers $(l_1,\ldots,l_p)$.

\subsection{Riggings}
We determine the rigging corresponding to a string in the following way.
If a string $S$ satisfies $\operatorname{Re}(S)=0$,
then the corresponding rigging is the half of the corresponding vacancy number.
Otherwise consider the set of all solutions with a configuration $\nu$, denoted by
\[
\operatorname{Sol}(\nu)=\{\lambda^{(1)},\lambda^{(2)},\ldots\}
=\{\lambda^{(\alpha)}\}.
\]
To start with we compute all Bethe numbers for $\operatorname{Sol}(\nu)$.

The main tool to construct bijection between the set of solutions to the Bethe ansatz equations
and rigged configurations is the following string crossing rule.
Let us consider a string $S$ satisfying $\operatorname{Re}(S)>0$ and length $n$.
Let $S_{i_1},\ldots,S_{i_q}$ be strings such that
\begin{itemize}
\item
length of $S_{i_k}=l_k<n$,
\item
$\operatorname{Re}(S)<\operatorname{Re}(S_{i_k})$
\end{itemize}
for $k=1,\ldots,q$.
Similarly let $S_{j_1},\ldots,S_{j_{q'}}$ be strings such that
\begin{itemize}
\item
length of $S_{j_k}=l_k'>n$,
\item
$\operatorname{Re}(S_{j_k})<\operatorname{Re}(S)$
\end{itemize}
for $k=1,\ldots,q'$.
Then we consider
\begin{align}
\widetilde{J}(S):=
J(S)-\sum_{k=1}^{q}\Delta(l_k,n)+\sum_{k=1}^{q'}\Delta(l_k',n),
\qquad
\Delta(l,n):=l+n-3.
\end{align}

Let us fix some $\nu_i$ of the partition $\nu$ and denote the multiplicity of $\nu_i$ by $m$.
Let $S^{(\alpha)}_{i_{\alpha,k}}$ be the strings of length $\nu_i$ within $\lambda^{(\alpha)}$.
Here we assume that the strings of the same length $\nu_i$ satisfy
$\operatorname{Re}(S^{(\alpha)}_{i_{\alpha,1}})>\operatorname{Re}(S^{(\alpha)}_{i_{\alpha,2}})
>\cdots >\operatorname{Re}(S^{(\alpha)}_{i_{\alpha,m}})$.
Define
\[
\mathcal{M}_{\nu_i,k}=\max_\alpha\{\widetilde{J}(S^{(\alpha)}_{i_\alpha,k})\}.
\]
Suppose that we have $\operatorname{Re}(S^{(\alpha)}_{i_\alpha,k})>0$.
Then the corresponding rigging is
\[
P_{\nu_i}(\nu)-\{\mathcal{M}_{\nu_i,k}-\widetilde{J}(S^{(\alpha)}_{i_\alpha,k})\}.
\]
If we have $\operatorname{Re}(S^{(\alpha)}_{i_\alpha,k})<0$,
we can use the flip operation to obtain the rigging.

Then our main conjecture is the following.
\begin{conjecture}
\label{conj:main}
The above procedure gives one-to-one correspondence between the set of
physical solutions to the Bethe ansatz equations and the rigged configurations.
\end{conjecture}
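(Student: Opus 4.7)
The plan is to split Conjecture~\ref{conj:main} into three parts: well-definedness of the algorithm, injectivity of the map from solutions to rigged configurations, and surjectivity/completeness. Each part should reduce to a statement in the exact-string limit together with perturbative control of the deviations from exact strings.

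First, I would analyze the sum $J(S)$ by summing the logarithmic Bethe ansatz equation (\ref{eq:bethe}) over the roots of a single string $S$ of length $n$. In the exact-string limit the intra-string scattering phases $\Arctan(\eta-\eta')$ for $\eta,\eta'\in S$ telescope, since consecutive roots differ by exactly $i$ and the $\Arctan$ branch-cut jumps cancel in pairs. What survives is a ``centre of mass'' term proportional to $n\,\Arctan(2\operatorname{Re}(S))$, an explicit string-string scattering contribution with each other string, and a discrete correction of $\pm\Delta(l,n)=\pm(l+n-3)$ each time a string of length $l$ moves past the real part of $S$ along the real axis. The specific choice $\Delta(l,n)=l+n-3$ in Section~\ref{sec:main} is dictated by counting exactly which branch cuts of $\Arctan$ are crossed in this process; I would make the count rigorous using Vladimirov's theorem \cite{Vla} that $\operatorname{Re}(S)$ is uniquely defined.

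Given this, the modified sum $\widetilde{J}(S)$ should coincide, up to an overall shift, with the Takahashi--Kirillov--Reshetikhin logarithmic quantum number attached to a row of the rigged configuration of shape $\nu$. The rigging is then the half-integer distance between $\widetilde{J}(S)$ and $\mathcal{M}_{\nu_i,k}$, and the bound $0\leq I\leq P_{\nu_i}(\nu)$ should follow from the standard vacancy-number estimate once one verifies that $\mathcal{M}_{\nu_i,k}$ realises the largest admissible half-integer in the KR parameterisation. The case $\operatorname{Re}(S)=0$ is the fixed point of the flip $\kappa$ of (\ref{eq:flip}), which both explains the assignment $I=P_{\nu_i}(\nu)/2$ and, via the Proposition in Section~\ref{sec:bethe}, makes compatibility with the flip conjecture of \cite{KiSa14} automatic. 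Injectivity would then follow because fixing $\nu$ together with all riggings pins down, string by string in order of decreasing real part, the asymptotic location of every root; an implicit-function argument around each exact-string configuration then recovers the genuine BAE solution uniquely.

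The hardest step, and what I expect to be the main obstacle, is surjectivity. Counting via the Kirillov--Reshetikhin completeness theorem gives the total number of rigged configurations with $\ell$ cells for a chain of length $N$, which combined with injectivity would force surjectivity, provided that physical singular solutions are accounted for through the Nepomechie--Wang criterion (\ref{eq:NWcondition}). The delicate issue is that actual BAE solutions deviate from the exact string shape (\ref{eq:string}); one must control these deviations uniformly and show that the algorithm returns exact half-integers in the prescribed range rather than mere approximations. In other words, a rigorous proof needs a quantitative string hypothesis, with explicit error bounds on the deviations of roots from their nominal positions, and this input is still open for $\ell\geq 3$. The numerical verification for $N=12$ strongly supports that such bounds hold, but extending the argument to all $N$ appears to require genuinely new analytic input on the structure of the BAE near each exact-string configuration.
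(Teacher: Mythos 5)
The statement you are addressing is presented in the paper as a conjecture, and the paper itself offers no proof: the only supporting evidence is an exhaustive numerical check of all solutions for $N=12$ using the data of \cite{GD2}, together with the worked example for $\nu=(3,2,1)$ in Section~\ref{sec:examples}. So your proposal is not competing with an existing argument; it is an attempt to supply one, and it must be judged on whether it closes the question. It does not. Each of your three steps (well-definedness, injectivity, surjectivity) is conditional on a quantitative string hypothesis --- uniform control of the deviations of Bethe roots from the exact string shape (\ref{eq:string}) --- which you yourself concede is open for $\ell\geq 3$. Since that input is exactly what is missing from the literature, and is the reason the authors state the result as a conjecture rather than a theorem, what you have written is a research program, not a proof.

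Beyond that global caveat, two specific steps need more care than you give them. First, the telescoping of the intra-string terms in $J(S)$ is more delicate than ``consecutive roots differ by exactly $i$ and the branch-cut jumps cancel in pairs'': when two roots of a string differ by exactly $i$, the argument of $\Arctan(\lambda_k-\lambda_j)$ sits at the logarithmic branch \emph{point} $\pm i$, not merely on the cut, so the individual Bethe numbers are ill-defined in the exact-string limit and the cancellation in the sum is of the form $\infty-\infty$; it must be regularized through the genuine (deviated) solution, which is again the quantitative string hypothesis. This is precisely why the paper introduces the deformations $\lambda_{1,2}(\theta)$ in Section~\ref{sec:bethe} for singular strings and observes --- without proof --- that only the \emph{sum} $J_1(\theta)+J_2(\theta)$ is stable. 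Second, your branch-cut count would need to reproduce the specific value $\Delta(l,n)=l+n-3$ of Section~\ref{sec:main} (note that the closing remark of Section~\ref{sec:examples} quotes $m+n-2$, an internal inconsistency you would have to resolve), as well as the conjectured closed formulas for $\mathcal{M}_{\nu_i,k}$, which are themselves only verified for $N=12$; nothing in your sketch pins down these constants. Finally, the surjectivity-by-counting step via Kirillov--Reshetikhin completeness is sound in principle but inherits the conditionality of injectivity and additionally presupposes the completeness conjecture of \cite{HNS1} that regular plus physical singular solutions exhaust all physical solutions. In short, the architecture of your argument is reasonable and consistent with the heuristics the authors themselves offer, but no step is actually proved --- and the paper proves none of them either.
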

We confirm the conjecture for the length 12 chain based on the numerical data by \cite{GD2}.
As supplementary data, we provide plots of the solutions to the Bethe ansatz equations
for $N=12$ case (except for simpler ones).
The solutions are arranged according to the real part of the largest string.
For each solution, we assign the numbers corresponding to the solution numbers in \cite{GD2}.

\begin{remark}
Roughly speaking, the riggings represent positions of strings.
Namely the larger rigging corresponds to the larger real part $\operatorname{Re}(S)$.
In \cite{Sakamoto2015}, we observed that in several examples
we can readily read off the riggings if we arrange the solutions according to
the real parts of the largest strings.
We confirm that this algorithm works for all solutions of length 12
if the number of the strings is at most 3.
In other cases, the appearance of the solutions in the ordered set of solutions
becomes more complicated.
Nevertheless we observe there is a strong relation between the riggings
and the positions of the strings even in such case.
\qed
\end{remark}

\subsection{Explicit formulas for $\mathcal{M}$}
We expect that the numbers $\mathcal{M}$ depend only on the length $N$ of the chain
and the shape of the configuration $\nu$.
We make the following conjecture.
If $\nu_2<3$, we have
\begin{align*}
\mathcal{M}_{\nu_1,1}
&=\frac{(N-\nu'_1-\max\{1,\nu'_2\})\nu_1}{2}.
\end{align*}
If $\nu_1>1$ and $\nu_2=1$, we have
\begin{align*}
\mathcal{M}_{1,1}
&=\frac{N+\nu_1+\nu_2-2\nu_1'-3}{2}.
\end{align*}
We checked these formulas for length 12 chain.

\section{Examples}
\label{sec:examples}
We consider the case of $\nu=(3,2,1)$ and $N=12$.
Below we give a list of the solutions depicted on the complex plane
which we arrange according to the real part of the 3-string.
To each root we provide the corresponding Bethe numbers.
Following the diagrams we give tables of the explicit numerical values
which we need in the computation of $\widetilde{J}$.
Note that the solution \#11 contains a singular string.
See also \cite[Section 5]{Sakamoto2015} for an alternative method to determine the riggings.

\paragraph{Group 1.}
\begin{center}
\unitlength 15pt
\begin{picture}(7,7)
\put(0.0,6.2){$\#1$}
\put(4.4,5.3){$7/2$}
\put(4.4,3.3){$5/2$}
\put(4.4,0.2){$9/2$}
\put(0.7,4.4){$-7/2$}
\put(0.7,1.1){$-5/2$}
\put(0.0,2.2){$-5/2$}
\put(4.084,3.){\circle*{0.3}}
\put(4.088,4.992){\circle*{0.3}}
\put(2.084,4.){\circle*{0.3}}
\put(1.57,3.){\circle*{0.3}}
\put(2.084,2.){\circle*{0.3}}
\put(4.088,1.008){\circle*{0.3}}
\put(0,3){\vector(1,0){6}}
\put(3,0){\vector(0,1){6}}
\end{picture}
\begin{picture}(7,7)
\put(0.0,6.2){$\#2$}
\put(4.4,5.3){$7/2$}
\put(4.4,3.3){$5/2$}
\put(4.4,0.2){$9/2$}
\put(0.7,4.4){$-7/2$}
\put(0.7,1.1){$-5/2$}
\put(1.0,2.2){$-3/2$}
\put(4.054,3.){\circle*{0.3}}
\put(4.058,4.992){\circle*{0.3}}
\put(1.718,4.006){\circle*{0.3}}
\put(2.394,3.){\circle*{0.3}}
\put(1.718,1.994){\circle*{0.3}}
\put(4.058,1.008){\circle*{0.3}}
\put(0,3){\vector(1,0){6}}
\put(3,0){\vector(0,1){6}}
\end{picture}
\begin{picture}(7,7)
\put(0.0,6.2){$\#3$}
\put(4.4,5.3){$7/2$}
\put(4.4,3.3){$5/2$}
\put(4.4,0.2){$9/2$}
\put(0.7,4.4){$-7/2$}
\put(0.7,1.1){$-5/2$}
\put(1.0,2.2){$-1/2$}
\put(3.996,3.){\circle*{0.3}}
\put(4.004,4.994){\circle*{0.3}}
\put(2.766,3.){\circle*{0.3}}
\put(1.616,4.01){\circle*{0.3}}
\put(1.616,1.99){\circle*{0.3}}
\put(4.004,1.006){\circle*{0.3}}
\put(0,3){\vector(1,0){6}}
\put(3,0){\vector(0,1){6}}
\end{picture}
\begin{picture}(7,7)
\put(0.0,6.2){$\#4$}
\put(4.3,5.3){$7/2$}
\put(4.3,3.3){$5/2$}
\put(4.3,0.2){$9/2$}
\put(0.7,4.4){$-7/2$}
\put(0.7,1.1){$-5/2$}
\put(3.2,2.2){$1/2$}
\put(3.93,3.){\circle*{0.3}}
\put(3.938,4.996){\circle*{0.3}}
\put(1.562,4.012){\circle*{0.3}}
\put(1.562,1.988){\circle*{0.3}}
\put(3.07,3.){\circle*{0.3}}
\put(3.938,1.004){\circle*{0.3}}
\put(0,3){\vector(1,0){6}}
\put(3,0){\vector(0,1){6}}
\end{picture}
\end{center}

\begin{center}
\unitlength 15pt
\begin{picture}(7,7)
\put(0.0,6.2){$\#5$}
\put(4.2,5.3){$7/2$}
\put(4.2,3.3){$5/2$}
\put(4.2,0.2){$9/2$}
\put(0.6,4.4){$-7/2$}
\put(0.6,1.1){$-5/2$}
\put(3.2,2.2){$3/2$}
\put(3.848,3.){\circle*{0.3}}
\put(3.858,4.998){\circle*{0.3}}
\put(3.386,3.){\circle*{0.3}}
\put(1.524,4.012){\circle*{0.3}}
\put(1.524,1.988){\circle*{0.3}}
\put(3.858,1.002){\circle*{0.3}}
\put(0,3){\vector(1,0){6}}
\put(3,0){\vector(0,1){6}}
\end{picture}
\begin{picture}(7,7)
\put(0.0,6.2){$\#6$}
\put(4.1,5.3){$9/2$}
\put(4.1,3.3){$3/2$}
\put(4.1,0.2){$11/2$}
\put(0.5,4.4){$-7/2$}
\put(0.5,1.1){$-5/2$}
\put(4.1,2.2){$3/2$}
\put(3.768,3.038){\circle*{0.3}}
\put(3.734,4.982){\circle*{0.3}}
\put(1.496,4.014){\circle*{0.3}}
\put(1.496,1.986){\circle*{0.3}}
\put(3.768,2.962){\circle*{0.3}}
\put(3.734,1.018){\circle*{0.3}}
\put(0,3){\vector(1,0){6}}
\put(3,0){\vector(0,1){6}}
\end{picture}
\begin{picture}(7,7)
\put(0.0,6.2){$\#7$}
\put(3.8,5.3){$9/2$}
\put(3.6,3.3){$3/2$}
\put(3.8,0.2){$11/2$}
\put(0.5,4.4){$-7/2$}
\put(0.5,1.1){$-5/2$}
\put(4.6,2.2){$5/2$}
\put(4.656,3.){\circle*{0.3}}
\put(3.46,3.){\circle*{0.3}}
\put(3.46,4.998){\circle*{0.3}}
\put(1.48,4.014){\circle*{0.3}}
\put(1.48,1.986){\circle*{0.3}}
\put(3.46,1.002){\circle*{0.3}}
\put(0,3){\vector(1,0){6}}
\put(3,0){\vector(0,1){6}}
\end{picture}
\end{center}

\[
\begin{array}{l|lll}
\hline\hline
\#1&0.54241927&0.54455699+0.99639165 i&0.54455699-0.99639165 i\\
&-0.45810568+0.50017785 i&-0.45810568-0.50017785 i&-0.71532188\\
\hline
\#2&0.52708058&0.52957875+0.99660493 i&0.52957875-0.99660493 i\\
&-0.64127830+0.50335013 i&-0.64127830-0.50335013 i&-0.30368149\\
\hline
\#3&0.49893578&0.50200196+0.99724969 i&0.50200196-0.99724969 i\\
&-0.69284388+0.50515234 i&-0.69284388-0.50515234 i&-0.11725196\\
\hline
\#4&0.46564665&0.46941736+0.99809739 i&0.46941736-0.99809739 i\\
&-0.71976299+0.50614240 i&-0.71976299-0.50614240 i&0.035044606\\
\hline
\#5&0.42430010&0.42960641+0.99941330 i&0.42960641-0.99941330 i\\
&-0.73869344+0.50683156 i&-0.73869344-0.50683156 i&0.19387395\\
\hline
\#6&0.38490522+0.01906127 i&0.36730804+0.99179719 i&0.36730804-0.99179719 i\\
&-0.75221326+0.50729383 i&-0.75221326-0.50729383 i&0.38490522-0.01906127 i\\
\hline
\#7&0.23056669&0.23083274+0.99967059 i&0.23083274-0.99967059 i\\
&-0.76056174+0.50745313 i&-0.76056174-0.50745313 i&0.82889133\\
\hline
\end{array}
\]

\paragraph{Group 2.}
\begin{center}
\unitlength 15pt
\begin{picture}(7,7)
\put(0.0,6.2){$\#8$}
\put(3.8,5.2){$9/2$}
\put(3.8,3.2){$1/2$}
\put(3.8,0.3){$11/2$}
\put(3.8,4.2){$7/2$}
\put(3.8,1.3){$7/2$}
\put(0.0,2.2){$-5/2$}
\put(3.412,3.){\circle*{0.3}}
\put(3.21,3.998){\circle*{0.3}}
\put(3.41,5.){\circle*{0.3}}
\put(1.34,3.){\circle*{0.3}}
\put(3.21,2.002){\circle*{0.3}}
\put(3.41,1.){\circle*{0.3}}
\put(0,3){\vector(1,0){6}}
\put(3,0){\vector(0,1){6}}
\end{picture}
\begin{picture}(7,7)
\put(0.0,6.2){$\#9$}
\put(3.6,5.2){$11/2$}
\put(3.6,3.2){$1/2$}
\put(3.6,0.3){$13/2$}
\put(3.6,4.2){$5/2$}
\put(3.6,1.3){$5/2$}
\put(0.8,2.2){$-3/2$}
\put(3.216,3.998){\circle*{0.3}}
\put(3.12,4.998){\circle*{0.3}}
\put(2.208,3.){\circle*{0.3}}
\put(3.118,3.){\circle*{0.3}}
\put(3.12,1.002){\circle*{0.3}}
\put(3.216,2.002){\circle*{0.3}}
\put(0,3){\vector(1,0){6}}
\put(3,0){\vector(0,1){6}}
\end{picture}
\begin{picture}(7,7)
\put(0.0,6.2){$\#10$}
\put(3.5,5.2){$11/2$}
\put(3.5,3.2){$1/2$}
\put(3.5,0.3){$13/2$}
\put(3.5,4.2){$5/2$}
\put(3.5,1.3){$5/2$}
\put(0.8,2.2){$-1/2$}
\put(3.138,3.998){\circle*{0.3}}
\put(3.024,4.998){\circle*{0.3}}
\put(2.652,3.){\circle*{0.3}}
\put(3.02,3.){\circle*{0.3}}
\put(3.024,1.002){\circle*{0.3}}
\put(3.138,2.002){\circle*{0.3}}
\put(0,3){\vector(1,0){6}}
\put(3,0){\vector(0,1){6}}
\end{picture}
\begin{picture}(7,7)
\put(0.0,6.2){$\#11$}
\put(3.4,5.2){$9/2$}
\put(3.4,3.2){$-1/2$}
\put(3.4,0.3){$-9/2$}
\put(03.4,2.2){$1/2$}
\put(3.,2.){\circle*{0.3}}
\put(3.,4.){\circle*{0.3}}
\put(3.,3.036){\circle*{0.3}}
\put(3.,4.986){\circle*{0.3}}
\put(3.,2.964){\circle*{0.3}}
\put(3.,1.014){\circle*{0.3}}
\put(0,3){\vector(1,0){6}}
\put(3,0){\vector(0,1){6}}
\end{picture}
\end{center}

\begin{center}
\unitlength 15pt
\begin{picture}(7,7)
\put(0.0,6.2){$\#12$}
\put(0.3,5.2){$-13/2$}
\put(0.3,3.2){$-1/2$}
\put(0.3,0.3){$-11/2$}
\put(0.3,4.2){$-5/2$}
\put(0.3,1.3){$-5/2$}
\put(3.3,2.2){$1/2$}
\put(2.98,3.){\circle*{0.3}}
\put(2.976,4.998){\circle*{0.3}}
\put(2.862,3.998){\circle*{0.3}}
\put(2.862,2.002){\circle*{0.3}}
\put(2.976,1.002){\circle*{0.3}}
\put(3.348,3.){\circle*{0.3}}
\put(0,3){\vector(1,0){6}}
\put(3,0){\vector(0,1){6}}
\end{picture}
\begin{picture}(7,7)
\put(0.0,6.2){$\#13$}
\put(0.3,5.2){$-13/2$}
\put(0.3,3.2){$-1/2$}
\put(0.3,0.3){$-11/2$}
\put(0.3,4.2){$-5/2$}
\put(0.3,1.3){$-5/2$}
\put(3.8,2.2){$3/2$}
\put(2.882,3.){\circle*{0.3}}
\put(2.88,4.998){\circle*{0.3}}
\put(2.784,3.998){\circle*{0.3}}
\put(2.784,2.002){\circle*{0.3}}
\put(2.88,1.002){\circle*{0.3}}
\put(3.792,3.){\circle*{0.3}}
\put(0,3){\vector(1,0){6}}
\put(3,0){\vector(0,1){6}}
\end{picture}
\begin{picture}(7,7)
\put(0.0,6.2){$\#14$}
\put(0.0,5.2){$-11/2$}
\put(0.0,3.2){$-1/2$}
\put(0.0,0.3){$-9/2$}
\put(0.0,4.2){$-7/2$}
\put(0.0,1.3){$-7/2$}
\put(4.7,2.2){$5/2$}
\put(4.66,3.){\circle*{0.3}}
\put(2.79,3.998){\circle*{0.3}}
\put(2.59,5.){\circle*{0.3}}
\put(2.588,3.){\circle*{0.3}}
\put(2.79,2.002){\circle*{0.3}}
\put(2.59,1.){\circle*{0.3}}
\put(0,3){\vector(1,0){6}}
\put(3,0){\vector(0,1){6}}
\end{picture}
\end{center}

\[
\begin{array}{l|lll}
\hline\hline
\#8&0.20669577&0.20597572+1.00038608 i&0.20597572-1.00038608 i\\
&0.10578435+0.50000000 i&0.10578435-0.50000000 i&-0.83021590\\
\hline
\#9&0.059726272&0.06007063+0.99927337 i&0.06007063-0.99927337 i\\
&0.10847310+0.50000000 i&0.10847310-0.50000000 i&-0.39681373\\
\hline
\#10&0.010757119&0.01249979+0.99958901 i&0.01249979-0.99958901 i\\
&0.06941354+0.50000000 i&0.06941354-0.50000000 i&-0.17458378\\
\hline
\#11&0.018539900 i&0.99377501 i&-0.99377501 i\\
&0.50000000 i&-0.50000000 i&-0.018539900 i\\
\hline
\#12&-0.010757119&-0.01249979+0.99958901 i&-0.01249979-0.99958901 i\\
&-0.06941354+0.50000000 i&-0.06941354-0.50000000 i&0.17458378\\
\hline
\#13&-0.059726272&-0.06007063+0.99927337 i&-0.06007063-0.99927337 i\\
&-0.10847310+0.50000000 i&-0.10847310-0.50000000 i&0.39681373\\
\hline
\#14&-0.20669577&-0.20597572+1.00038608 i&-0.20597572-1.00038608 i\\
&-0.10578435+0.50000000 i&-0.10578435-0.50000000 i&0.83021590\\
\hline
\end{array}
\]

\paragraph{Group 3.}
\begin{center}
\unitlength 15pt
\begin{picture}(7,7)
\put(0.0,6.2){$\#15$}
\put(0.0,5.3){$-11/2$}
\put(0.8,3.3){$-3/2$}
\put(0.4,0.2){$-9/2$}
\put(4.5,4.4){$5/2$}
\put(4.5,1.1){$7/2$}
\put(0.0,2.2){$-5/2$}
\put(4.52,1.986){\circle*{0.3}}
\put(4.52,4.014){\circle*{0.3}}
\put(2.54,4.998){\circle*{0.3}}
\put(1.344,3.){\circle*{0.3}}
\put(2.54,3.){\circle*{0.3}}
\put(2.54,1.002){\circle*{0.3}}
\put(0,3){\vector(1,0){6}}
\put(3,0){\vector(0,1){6}}
\end{picture}
\begin{picture}(7,7)
\put(0.0,6.2){$\#16$}
\put(0.0,5.3){$-11/2$}
\put(0.4,3.3){$-3/2$}
\put(0.4,0.2){$-9/2$}
\put(4.5,4.4){$5/2$}
\put(4.5,1.1){$7/2$}
\put(0.4,2.2){$-3/2$}
\put(4.504,4.014){\circle*{0.3}}
\put(2.232,3.038){\circle*{0.3}}
\put(2.266,4.982){\circle*{0.3}}
\put(2.232,2.962){\circle*{0.3}}
\put(2.266,1.018){\circle*{0.3}}
\put(4.504,1.986){\circle*{0.3}}
\put(0,3){\vector(1,0){6}}
\put(3,0){\vector(0,1){6}}
\end{picture}
\begin{picture}(7,7)
\put(0.0,6.2){$\#17$}
\put(0.0,5.3){$-9/2$}
\put(0.0,3.3){$-5/2$}
\put(0.0,0.2){$-7/2$}
\put(4.4,4.4){$5/2$}
\put(4.4,1.1){$7/2$}
\put(1.0,2.2){$-3/2$}
\put(4.476,4.012){\circle*{0.3}}
\put(2.142,4.998){\circle*{0.3}}
\put(2.152,3.){\circle*{0.3}}
\put(2.142,1.002){\circle*{0.3}}
\put(2.614,3.){\circle*{0.3}}
\put(4.476,1.988){\circle*{0.3}}
\put(0,3){\vector(1,0){6}}
\put(3,0){\vector(0,1){6}}
\end{picture}
\begin{picture}(7,7)
\put(0.0,6.2){$\#18$}
\put(0.0,5.3){$-9/2$}
\put(0.0,3.3){$-5/2$}
\put(0.0,0.2){$-7/2$}
\put(4.4,4.4){$5/2$}
\put(4.4,1.1){$7/2$}
\put(1.0,2.2){$-1/2$}
\put(4.438,4.012){\circle*{0.3}}
\put(2.93,3.){\circle*{0.3}}
\put(2.062,4.996){\circle*{0.3}}
\put(2.07,3.){\circle*{0.3}}
\put(2.062,1.004){\circle*{0.3}}
\put(4.438,1.988){\circle*{0.3}}
\put(0,3){\vector(1,0){6}}
\put(3,0){\vector(0,1){6}}
\end{picture}
\end{center}

\begin{center}
\unitlength 15pt
\begin{picture}(7,7)
\put(0.0,6.2){$\#19$}
\put(0.0,5.3){$-9/2$}
\put(0.0,3.3){$-5/2$}
\put(0.0,0.2){$-7/2$}
\put(4.4,4.4){$5/2$}
\put(4.4,1.1){$7/2$}
\put(3.2,2.2){$1/2$}
\put(4.384,4.01){\circle*{0.3}}
\put(1.996,4.994){\circle*{0.3}}
\put(2.004,3.){\circle*{0.3}}
\put(1.996,1.006){\circle*{0.3}}
\put(3.234,3.){\circle*{0.3}}
\put(4.384,1.99){\circle*{0.3}}
\put(0,3){\vector(1,0){6}}
\put(3,0){\vector(0,1){6}}
\end{picture}
\begin{picture}(7,7)
\put(0.0,6.2){$\#20$}
\put(0.0,5.3){$-9/2$}
\put(0.0,3.3){$-5/2$}
\put(0.0,0.2){$-7/2$}
\put(4.4,4.4){$5/2$}
\put(4.4,1.1){$7/2$}
\put(3.5,2.2){$3/2$}
\put(3.606,3.){\circle*{0.3}}
\put(4.282,4.006){\circle*{0.3}}
\put(1.942,4.992){\circle*{0.3}}
\put(1.946,3.){\circle*{0.3}}
\put(1.942,1.008){\circle*{0.3}}
\put(4.282,1.994){\circle*{0.3}}
\put(0,3){\vector(1,0){6}}
\put(3,0){\vector(0,1){6}}
\end{picture}
\begin{picture}(7,7)
\put(0.0,6.2){$\#21$}
\put(0.0,5.3){$-9/2$}
\put(0.0,3.3){$-5/2$}
\put(0.0,0.2){$-7/2$}
\put(3.9,4.4){$5/2$}
\put(3.9,1.1){$7/2$}
\put(4.5,2.2){$5/2$}
\put(4.43,3.){\circle*{0.3}}
\put(3.916,4.){\circle*{0.3}}
\put(1.912,4.992){\circle*{0.3}}
\put(1.916,3.){\circle*{0.3}}
\put(1.912,1.008){\circle*{0.3}}
\put(3.916,2.){\circle*{0.3}}
\put(0,3){\vector(1,0){6}}
\put(3,0){\vector(0,1){6}}
\end{picture}
\end{center}

\[
\begin{array}{l|lll}
\hline\hline
\#15&-0.23056669&-0.23083274+0.99967059 i&-0.23083274-0.99967059 i\\
&0.76056174+0.50745313 i&0.76056174-0.50745313 i&-0.82889133\\
\hline
\#16&-0.38490522+0.01906127 i&-0.36730804+0.99179719 i&-0.36730804-0.99179719i\\
&0.75221326+0.50729383 i&0.75221326-0.50729383 i&-0.38490522-0.01906127 i\\
\hline
\#17&-0.42430010&-0.42960641+0.99941330 i&-0.42960641-0.99941330 i\\
&0.73869344+0.50683156 i&0.73869344-0.50683156 i&-0.19387395\\
\hline
\#18&-0.46564665&-0.46941736+0.99809739 i&-0.46941736-0.99809739 i\\
&0.71976299+0.50614240 i&0.71976299-0.50614240 i&-0.035044606\\
\hline
\#19&-0.49893578&-0.50200196+0.99724969 i&-0.50200196-0.99724969 i\\
&0.69284388+0.50515234 i&0.69284388-0.50515234 i&0.11725196\\
\hline
\#20&-0.52708058&-0.52957875+0.99660493 i&-0.52957875-0.99660493 i\\
&0.64127830+0.50335013 i&0.64127830-0.50335013 i&0.30368149\\
\hline
\#21&-0.54241927&-0.54455699+0.99639165 i&-0.54455699-0.99639165 i\\
&0.45810568+0.50017785 i&0.45810568-0.50017785 i&0.71532188\\
\hline
\end{array}
\]

Let us explain how our algorithm works for the examples listed.

\paragraph{3-strings.}
For solutions \#1, \ldots, \#5, we have $\widetilde{J}=21/2$
since there are no shorter strings on the right of the 3-string.

In solution \#6, we have two roots $0.38\pm 0.02i$.
We classify one of them to 1-string and the other to 3-string.
In any case, 1-string is on the right of the 3-string.
Thus we have $\widetilde{J}=23/2-\Delta(3,1)=21/2$.
Similarly we have $\widetilde{J}=21/2$ in solution \#7.

In solution \#8, we have $\tilde{J}=21/2$ since there are no shorter
string on the right.

In solutions \#9 and \#10 we have $\widetilde{J}=25/2-\Delta(3,2)=21/2$
since the 2-string is on the right of the 3-string.

In solution \#11, the real part of the 3-string is 0.
Thus the corresponding rigging is the half of the corresponding vacancy number
$P_3(3,2,1)=0$, that is, the rigging is 0.

In solutions \#12, \ldots, \#21, we have $\widetilde{J}<0$.

Summarizing, we have $\mathcal{M}_3=21/2$.
Since we have $\widetilde{J}=\pm 21/2$ (except for the solution \#11),
we conclude that the corresponding rigging is always $0$.

\paragraph{2-strings.}
In solutions \#1, \ldots, \#7, \#12, \#13, and \#14,
the real part of the 2-strings are negative.

In solution \#8, we have $\tilde{J}=14/2$ since the 3-string is on the right
and the 1-string is on the left.

In solutions \#9 and \#10, we have $\widetilde{J}=10/2+\Delta(2,3)=14/2$
since the 3-string is on the left of the 2-string.

In solutions \#15, \ldots, \#21, we have $\widetilde{J}=12/2+\Delta(2,3)=16/2$
since the 3-string is on the left of the 2-string.
Note that the position of the 1-string does not affect $\widetilde{J}$
since we have $\Delta(2,1)=0$.

To summarize, we have $\mathcal{M}_2=16/2$.
Thus the riggings for the two strings are 0 for solutions in Group 1,
1 for solutions in Group 2 and 2 for the solutions in Group 3.

\paragraph{1-strings.}
By a similar argument we see that $\mathcal{M}_1=7/2$.
Then the riggings for each group is $1,2,\ldots,6$ from the smaller solutions number
to the larger solutions number.

The resulting rigged configurations agree with the ones obtained by a geometrical argument
\cite{Sakamoto2015} or by the computation of Takahashi's quantum numbers \cite{GD2}
(see also \cite{DG1}).

\begin{remark}
The above example reminds us about the box-ball system
(see \cite{Sakamoto_review} for a review).
The box-ball system is a discrete soliton system
where solitons are ``crystal" analogue of magnons.
One of the important properties is that the rigged configurations provide
action and angle variables of the box-ball system.
In this picture each row $\nu_i$ of the configuration $\nu$
represents a soliton of the same length.
If we look at the above examples, it is tempting to consider
them as a propagation of the string of solutions where scattering of
strings yields a phase shift $\Delta(m,n)=m+n-2$.
Note that the phase shift in the case of the box-ball system is $\min(m,n)$
for the scattering of lengths $m$ and $n$ solitons.
It will be interesting to remark that in the box-ball system the crystal analogue of
the transfer matrices, which provides the quantum integrability of the box-ball system,
generates the corresponding rigged configurations which are dynamical variables
if we regard the box-ball system as classical integrable system
\cite{Sakamoto2007}.
\qed
\end{remark}

\section{Conclusion}
\label{sec:conclusion}
Summarizing, we describe an algorithm which associate collection of quantum numbers or riggings
to solutions to the Bethe ansatz equations for the spin-1/2 Heisenberg model,
starting from the set of Bethe's quantum numbers corresponding to those solutions
and its string structure for general $N$.
We confirm our algorithm for all solutions to the Bethe ansatz equations for $N=12$.

\end{document}